\theoremstyle{definition}
\newtheorem{definition}{Definition}[]
\newtheorem{remark}{Remark}
\newtheorem{theorem}{Statement}
\newtheorem{corollary}{Lemma}[theorem]
\algnewcommand{\Initialize}[1]{%
  \State \textbf{Initialize:}
  \Statex \hspace*{\algorithmicindent}\parbox[t]{1\linewidth}{\raggedright #1}
}
\tikzset{
->, 
>=stealth, 
node distance=3cm, 
every state/.style={thick, fill=gray!10}, 
initial text=$ $, 
}
\def\addlegendimage{\pgfplots@addlegendimage}
\newcommand{\algorithmfootnote}[2][\footnotesize]{%
  \let\old@algocf@finish\@algocf@finish
  \def\@algocf@finish{\old@algocf@finish
    \leavevmode\rlap{\begin{minipage}{\linewidth}
    #1#2
    \end{minipage}}%
  }%
}
\title{A Model Predictive Control Framework for Improving Risk-Tolerance of Manufacturing Systems}
\date{June 2022}
\author{Mostafa Tavakkoli Anbarani\textsuperscript{1$\ast$} \thanks{\textsuperscript{1} Department of Mechanical Engineering, Pennsylvania State University, PA, USA}, Efe C. Balta\textsuperscript{2} \thanks{\textsuperscript{2} Automatic Control Laboratory, ETH Z\"urich, Z\"urich, Switzerland}, Rômulo Meira-Góes\textsuperscript{3} \thanks{\textsuperscript{3} Department of Electrical Engineering, Pennsylvania State University, PA, USA}, and Ilya Kovalenko\textsuperscript{1,4} \thanks{\textsuperscript{4} Department of Industrial and Manufacturing Engineering, Pennsylvania State University, PA, USA}}
\begin{document}
\maketitle

\begin{abstract}
     The need for control strategies that can address dynamic system uncertainty is becoming increasingly important. In this work, we propose a Model Predictive Control by quantifying the risk of failure in our system model. The proposed control scheme uses a Priced Timed Automata representation of the manufacturing system to promote the fail-safe operation of systems under uncertainties. The proposed method ensures that in case of unforeseen failure(s), the optimization-based control strategy can still achieve the manufacturing system objective. In addition, the proposed strategy establishes a trade-off between minimizing the cost and reducing failure risk to allow the manufacturing system to function effectively in the presence of uncertainties. An example from manufacturing systems is presented to show the application of the proposed control strategy.
\end{abstract}

\section{Introduction}
\label{section-intro}
\lettrine[findent=2pt]{\textbf{M}}{anufacturing} has witnessed a dramatic shift from traditional methods to large and interconnected plants that integrate the physical system with computational infrastructure and communication networks\cite{NEGRI2017939,ZHOU201811}. Modern manufacturing plants require production flexibility to meet variable market demands\cite{flexibility-demand-1,flexibility-demand-2} and both route flexibility and process flexibility to achieve the production objective \cite{flexibility-types}. The complexity and the number of subsystems requires integrating a large amount of data and a control scheme capable of processing this data in real-time. 
Moreover, the controller needs to operate  under subsystem failures, uncertainties, and unpredictable performance requirements.
Recent research makes use of ontologies and explicit formal semantic data models, to analyze large data sets \cite{MAJEED2021102026}. 
\textit{Big Data} techniques allow for rapid decision-making based on the model outputs\cite{ayerbe-ontology,CASSOLI-ontology}. However, it is often assumed that the data model is well-defined, and is not prone to uncertainties or subsystem failure. 
As such, the control action is significantly hindered due to system uncertainties.
\par
One recent formalism to model manufacturing systems is Discrete Event System (DES) modeling\cite{ASSAD2021142}, where the state space of the system is described by a discrete set and state transitions\cite{cassandras_introduction_2021}.
Timed-Automata (TA), a class of DES models, have been used to quantify  performance metrics like productivity and flexibility, and constraints like delivery deadlines, maintenance periods, and production rates. In TAs, the set of clock variables are augmented to the manufacturing line model\cite{timed-automaton-1,timed-automata-2}.
Priced Timed Automata (PTA) models are extensions of TAs where  temporal hard constraints like deadlines can be converted to soft constraints.
For example, extending deadlines in return for lower profit, i.e., it allows for temporal constraint flexibility in return for penalty  \cite{balta_model_2022,kovalenko_cooperative_2022,hutchison_priced_2005}. 
PTAs are used as a modeling formalism ranging from simulating the performance of Cyber-Physical Manufacturing \cite{tilbury_cyber-physical_2019,panetto_challenges_2019} to smart process planning \cite{gray2020industrial,ionescu2020digital,kovalenko-riskaverse}.
\par
While PTA formulation enables an optimization-based control strategy that minimizes temporal cost metrics, the underlying algorithm lacks risk-averseness with respect to machine uncertainties  i.e., the ability for the control strategy to find a solution that has minimum risk of failure. For instance, a production line timeline that assumes all workstations are working without faults or failures can be modeled and controlled via PTA-based methods. One method to address this issue for uncertain systems is Model Predictive Control (MPC), where the control law is implicitly formulated based on the prediction model as opposed to a pre-determined model \cite{tarragona_systematic_2021}. The prediction model provides a forecast of system behavior in a limited horizon \cite{rawlings_model_2009,schwenzer_review_2021}. Therefore, the production line timeline can be scheduled based on the current status of workstations. The model contains information that can be used to further improve the risk-averseness of the system. however, exiting control strategies do not leverage all of this model information when finding solutions. One methodology that has been previously proposed to reduce the risk of failure as a result of state failure is based on quantifying the risk in terms of the number of alternative routes \cite{anbarani_risk-averse_2022}. While this technique provides a fail-safe solution when the routes have different number of alternative routes, it might end in failure when multiple solutions have same number of alternative routes with different dispersion along the path.
\par
In this work, we propose a framework  for a risk-averse PTA-MPC that uses automaton-based features to quantify and minimize risk. Specifically, our strategy takes into account the number of states that are prone to failure and assesses the solution based on the number of these states. The performance of a manufacturing system that uses the proposed controller is compared to that of existing PTA-MPCs described in \cite{balta_model_2022,anbarani_risk-averse_2022} using simulation.
\par
This work is organized as follows. Section~\ref{sec-modeling} describes PTA modeling in manufacturing systems. Section~\ref{sec-PTAMPC} provides the risk-averse PTA-MPC scheme. Section~\ref{sec-casestudy} showcases an example of a manufacturing line case study that compares the proposed method to PTA-MPC methods. Finally, concluding remarks are given in Section~\ref{sec-conclusion}.

\section{PTA Modeling}
\label{sec-modeling}
Consider a flexible manufacturing layout with multiple production lines, where each production line is comprised of multiple manufacturing processes 
stations, e.g., the manufacturing system described in Section~\ref{sec-casestudy}.
Products transition between manufacturing stations via material handling resources (e.g. conveyor lines, robots, people) and the completion of each individual process is measured by a timer.
The goal is to ensure that the final product undergoes a certain set of machining processes in a pre-defined order.
Products can move to production lines by using flexible material handling resources that are used in case of machine breakdown or maintenance in their current production line.
These flexible manufacturing systems can be modeled using Priced Timed Automata (PTAs).
\begin{definition}[Priced Timed Automaton]
A PTA is defined as follows:
\begin{equation}
    \mathcal{A}=(Q, C,\Sigma, E, I, R, P, q_{0})
\end{equation}
where $Q=\{q^{1},q^{2},\cdots,q^{n_{q}}\}$ is a finite set of states representing manufacturing stations and buffers, $C=c^{1}\times c^{2}\times \cdots c^{n_{c}}=\mathbb{R} ^{n_{c}}_{\geq 0}$ is the set of clocks, $\Sigma=\langle\sigma^{1},\cdots,\sigma^{n_{\sigma}}\rangle$ is an ordered finite set of desired states representing the product work order, $E\subseteq Q \times \mathcal{B}(C) \times \Sigma \times Q$ is a finite set of edges that represent the material handling resources, $I: Q \to \mathcal{B} (C)$ is the invariant operator, $R: E \times C \to C$ is the reset operator, $P: Q \cup E \to [0,\infty)$ maps the cost of the production line layout, and $q_{0} \in Q$ is the initial state of the product.
\end{definition}


The manufacturing PTA model, denoted as \emph{Total Layout}, is divided into two other manufacturing models: \emph{Original Layout} and \emph{Redundant Layout}.
The original layout models the manufacturing system neglecting the flexible material handling resources that are used to move products.
On the other hand, the redundant layout describes manufacturing models of the flexible material handling resources only.


\begin{definition}[Total, Original, and Redundant Layouts]
Given the system \emph{Total Automaton Layout} is defined by the user as $\mathcal{A}^{T}=(Q, C,\Sigma, E, I, R, P, q_{0})$, the \emph{Original Automaton Layout} $\mathcal{A}^{O} \subseteq \mathcal{A}^{T}$ is defined by $\mathcal{A}^{O}=(Q^{O}, C^{O},\Sigma, E^{O}, I^{O}, \allowbreak R^{O}, P^{O}, q_{0})$ (manufacturing layout without the flexible material handling resources). Moreover, the \emph{Redundant Automaton Layout} $\mathcal{A}^{R}=\cup_{v=1}^{k}\mathcal{A}^{R_{v}}$ is defined by $\mathcal{A}^{R}=\cup_{v=1}^{k}(Q^{R_{v}}, C^{R_{v}},\Sigma, E^{R_{v}}, I^{R_{v}}, R^{R_{v}},\allowbreak P^{R_{v}}, q_{R_{v}})$ (connected by the flexible material handling resources),  where $k$ is the number of redundant paths, $q_{R_{v}}$ and $Q^{R_{v}}$ are the redundant initial state and the set of redundant state locations defined as below:
\begin{align}
    & Q^{R_{v}}=\{ \langle q^{i},q^{r_{1}},\cdots,q^{r_{n}},q^{j}\rangle | q^{i},q^{j} \in Q^{O}, \\
    &\qquad \quad \forall s \in \{r_{1},\cdots,r_{n}\}, \hspace{0.2cm} q^{s}\in Q^{T}-Q^{O} \} \notag\\
    & q_{R_{v}}=q^{i}
\end{align}
Thus, $Q^{R}=\cup_{v=1}^{k}Q^{R_{v}}$ represents an ordered state set in which only the first and last states of all redundant paths exist in the original automaton.
\end{definition}
The following definitions determine the properties that two paths in the original automaton $A^{O}$ should share so that they can be connected by a redundant path from redundant automaton $\mathcal{A}^{R}$ in case of state failure in the path the redundant path is branching from.
\begin{definition}[Active Redundant Path]
\label{def-activeredundantpath}
The redundant path $Q^{R}=\langle q^{i},q^{r_{1}},\cdots,q^{r_{n}},q^{j}\rangle$ that connects equivalent paths $U_{1}=\langle \cdots ,q^{i-1},q^{i},q^{i+1},\cdots \rangle$ and $U_{2}=\langle \cdots ,q^{j-1},q^{j},q^{j+1},\allowbreak \cdots \rangle$ is defined to be an active redundant path if the two subsets $U_{m1}=\langle \cdots,q^{i-1},q^{i}\rangle$ and $U_{m2}=\langle q^{j},q^{j+1},\cdots \rangle$ are legal paths. Otherwise, the redundant path is passive, i.e. the flexible material handling resources that is not used.
\end{definition}
\begin{definition}[Out-Degree Centrality \& Branch State]
The out-degree centrality of a given station $q^{i}$, denoted by $x_{i}$, is defined to be the number of the flexible material handling resources this station is connected to \cite{newman_networks_2018}. 
If the station $q^{i}$ has an out-degree centrality of more than one, it is called Branch State.   
\end{definition}
\begin{definition}[Path Length]
For a given path $U = \langle q^{i},q^{i+1},\cdots,q^{i+n}\rangle \in Q$ the path length $L(U)=n$ is defined to be the number of machining stations or buffers in a production line.
\end{definition}
\begin{definition}[Committed Sub-Path]
 Given a path $U = \langle q^{i},q^{i+1},\cdots,q^{i+n}\rangle \in Q$, the path $U_{CSP} = \langle q^{i+l_{1}},\cdots,q^{i+l_{2}}\rangle,0\leq l_{1}<l_{2} \leq n$  is called the Committed Sub-Path (CSP) if:
 \begin{align}
 \label{eq-outdegreecentrality}
    & x_{i+l_{1}},x_{i+l_{2}}\geq 2, \notag\\
    & \forall l_{1}<j<l_{2}: x_{i+j} = 1 
 \end{align}
Therefore, CSP is a portion of the production line where all stations have an out-degree centrality of one, i.e., no branch state exists in that portion.
\end{definition}

\begin{definition}[Path Commitment Measure]
Assume $\mathcal{H}_{U}=\{ U_{1},\cdots,U_{m} \}$ is the set of all $U_{CSP}$ of path $U = \langle q^{i},q^{i+1},\cdots,q^{i+n}\rangle \in Q$. Then $\kappa(U)$, the Path Commitment Measure (PCM) for path $U$, is defined as:
\begin{align}
\label{eq-kappa}
    \kappa(U)=\frac{\Gamma(\mathcal{H}_{U})}{mL(U)}
\end{align}
where $\Gamma(\mathcal{H}_{U})=\sum_{i=1}^{m} L(U_{i})$ is the total length of all CSPs in $U$.
If $U$ does not have at least two Active Redundant Paths as per Definition \ref{def-activeredundantpath}, then $\kappa(U)$ is defined to be 1.
\par
The PCM represents the average length of all CSP existing in the path introducing a measure of the chance of failure when the current station is located in one of CSPs in path~$U$. As $0\leq \kappa(U) \leq 1$, the higher values of $\kappa(U)$ represent a higher risk of failure.
\end{definition}
\begin{remark}
    $\kappa(U)=0$ corresponds to a completely risk-averse case where all states in the path $U$ are branch states.
\end{remark}
\begin{theorem}
If two paths $U^{1}$ and $U^{2}$ have the same length, the same number of CSPs, and $\Gamma(\mathcal{H}_{U^1})=\Gamma(\mathcal{H}_{U^2})$, then $\kappa({U^{1}})=\kappa({U^{2}})$.
\end{theorem}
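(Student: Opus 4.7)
The plan is to prove this by direct substitution into the definition of the Path Commitment Measure, since the hypotheses are tailored exactly to match the three quantities that appear in the formula. First, I would recall from the definition of $\kappa$ that
\begin{equation*}
    \kappa(U) = \frac{\Gamma(\mathcal{H}_U)}{m\,L(U)},
\end{equation*}
where $m = |\mathcal{H}_U|$ is the number of Committed Sub-Paths of $U$, and $L(U)$ is the path length. Thus $\kappa(U)$ is a function of exactly three scalar inputs: $\Gamma(\mathcal{H}_U)$, $m$, and $L(U)$.

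Next, I would translate the three hypotheses into equalities between these inputs for $U^1$ and $U^2$: equal path length gives $L(U^1) = L(U^2)$; equal number of CSPs gives $|\mathcal{H}_{U^1}| = |\mathcal{H}_{U^2}|$, call the common value $m$; and the third hypothesis gives $\Gamma(\mathcal{H}_{U^1}) = \Gamma(\mathcal{H}_{U^2})$. Before substituting, I would briefly note that the denominator $m\,L(U)$ is nonzero, since the statement is only meaningful when the paths contain at least one CSP and have positive length; otherwise the case reduces to the convention $\kappa(U)=1$ from the definition (which, if both paths share the degenerate structure, still yields equal values).

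I would then conclude by substitution:
\begin{equation*}
    \kappa(U^1) = \frac{\Gamma(\mathcal{H}_{U^1})}{m\,L(U^1)} = \frac{\Gamma(\mathcal{H}_{U^2})}{m\,L(U^2)} = \kappa(U^2).
\end{equation*}
There is essentially no obstacle, as the statement is a direct consequence of the fact that $\kappa$ depends only on the three quantities the hypotheses equate; the only subtlety worth flagging is handling the corner case in which one (and hence both) of the paths fails to have at least two active redundant paths, where the definition assigns $\kappa(U) = 1$ by convention, and the conclusion is again immediate.
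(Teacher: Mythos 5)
Your proof is correct: the claim follows immediately by substituting the three equated quantities into the definition of $\kappa$ in Equation~\ref{eq-kappa}, which is exactly the (implicit) justification the paper relies on --- indeed the paper states this result without any written proof, treating it as a direct consequence of the formula. The only minor imprecision is your parenthetical ``one (and hence both)'' in the corner case, since the hypotheses do not by themselves force $U^1$ and $U^2$ to agree on whether they have at least two active redundant paths; but the paper ignores this case entirely, so your treatment is if anything more careful than the original.
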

\begin{corollary}
If two paths $U^{1}$ and $U^{2}$ have the same length $L(U)$ and the same number of branch states then the out-degree centrality of branch states does not affect the PCM.
\end{corollary}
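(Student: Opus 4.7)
The plan is to trace through the definition of $\kappa(U)$ in \eqref{eq-kappa} and show that none of its ingredients, namely $L(U)$, $m$, or $\Gamma(\mathcal{H}_U)$, actually reference the numerical out-degree values at branch states; each is a function only of the binary distinction $x_i = 1$ versus $x_i \geq 2$.

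First, $L(U)$ is fixed by the length hypothesis. For $m$ and $\Gamma$, I would appeal directly to the CSP condition \eqref{eq-outdegreecentrality}: a sub-path is a CSP exactly when its endpoints are branch states and every interior station has out-degree one. This characterization of $\mathcal{H}_U$ depends only on which stations are branch states, not on how large their out-degrees are, so both the count $m = |\mathcal{H}_U|$ and each constituent length $L(U_i)$ are insensitive to the particular integer values of $x_i \geq 2$ at the branch states. Summing recovers the same insensitivity for $\Gamma(\mathcal{H}_U) = \sum_i L(U_i)$.

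Combining these observations with the preceding Statement, any change in the out-degree values at branch states that preserves their branch-state status leaves $L$, $m$, and $\Gamma$ intact, and hence leaves $\kappa(U)$ intact. Under the stated hypothesis of shared length and shared branch-state count, this is precisely the assertion that the out-degree centralities of the branch states do not enter into the PCM comparison between $U^1$ and $U^2$.

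The step I would take most care with is making explicit that the hypothesis fixes only the combinatorial counts (path length and number of branch states) and not the integer magnitudes of $x_i$ at the branch states; once this reading is isolated, the conclusion is immediate from the definitions and no real technical obstacle remains beyond the bookkeeping of what ``does not affect'' is asserting.
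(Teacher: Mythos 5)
Your proof is correct but takes a genuinely different route from the paper's. The paper argues by direct computation: it writes the total CSP length explicitly as $\Gamma(\mathcal{H}_{U^1})=\Gamma(\mathcal{H}_{U^2})=L(U)-w$, where $w$ is the common number of branch states, observes that this expression contains no out-degree magnitudes, and then concludes from Equation~\ref{eq-kappa} that $\kappa(U^1)=\kappa(U^2)$. You instead give an invariance argument: the CSP condition in Equation~\ref{eq-outdegreecentrality} only tests the binary predicate $x_i=1$ versus $x_i\geq 2$, so the entire decomposition $\mathcal{H}_U$ --- and with it $m$, each $L(U_i)$, and $\Gamma(\mathcal{H}_U)$ --- is unchanged by any perturbation of the out-degree values at branch states that preserves branch-state status. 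Your route buys two things: it explicitly accounts for the denominator factor $m$ in Equation~\ref{eq-kappa}, which the paper's proof never mentions, and it does not lean on the identity $\Gamma=L(U)-w$, which is sensitive to exactly how CSP lengths are counted at shared branch-state endpoints. What it does not deliver is the paper's stronger conclusion $\kappa(U^1)=\kappa(U^2)$ for two distinct paths: you show each path's PCM is insensitive to the out-degree magnitudes at its branch states, which is the literal claim of the Lemma, whereas equality of the two PCMs would additionally require that both paths share the same $m$ and the same $\Gamma$, something that depends on where the branch states sit along each path and is guaranteed neither by the hypothesis nor by your argument (nor, strictly, by the paper's). Since the Lemma asserts only that the out-degree centralities ``do not affect'' the PCM, your reading and proof are faithful to it; just be aware that the paper's own proof is aiming at, and claims, the stronger path-to-path equality.
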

\begin{proof}
    $\Gamma(\mathcal{H}_{U^1})=\Gamma(\mathcal{H}_{U^2})=L(U)-w$ where $w$ is the number of branch states. $\Gamma(\mathcal{H}_{U^1})$ and $\Gamma(\mathcal{H}_{U^2})$ do not depend on the out-degree centrality of branch states. Using Equation~\ref{eq-kappa}, $\kappa({U^1})=\kappa({U^2})$ and the Lemma is established.
\end{proof}
\begin{remark}
In this work, symbol "$\to$" is the notion of \textit{imply}.
\end{remark}
\begin{corollary}
\label{corollary-minkappa}
If two paths $U^{1}$ and $U^{2}$ have the same length $L(U)$, and $U^1$ has a larger count of branch states, then $U^1$ has a lower PCM.  
\end{corollary}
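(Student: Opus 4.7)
The plan is to substitute explicit formulas for $\Gamma(\mathcal{H}_U)$ and the CSP count $m$ (both as functions of the branch-state count $w$) into the PCM definition, and then show the resulting expression is strictly decreasing in $w$ when $L(U)$ is held fixed. First, I would reuse the identity $\Gamma(\mathcal{H}_U) = L(U) - w$ already established within the proof of the preceding Lemma. Next, I would derive the companion identity $m = w - 1$: since every CSP has two branch-state endpoints and every pair of consecutive branch states along $U$ defines exactly one CSP, the $w$ branch states chain together into $w - 1$ CSPs.

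Substituting both relations into \eqref{eq-kappa} yields
\[
    \kappa(U) \;=\; \frac{L(U) - w}{(w-1)\, L(U)}.
\]
A direct differentiation with respect to $w$ gives
\[
    \frac{\partial \kappa}{\partial w} \;=\; -\,\frac{L(U)-1}{L(U)\,(w-1)^{2}},
\]
which is strictly negative whenever $L(U) \geq 2$. Hence $\kappa(U)$ is a strictly decreasing function of $w$ on the admissible domain $w \geq 2$, and the hypothesis $w_1 > w_2$ immediately yields $\kappa(U^1) < \kappa(U^2)$, as claimed. Equivalently, monotonicity can be verified by cross-multiplication: $\kappa(U^1) < \kappa(U^2)$ reduces to $(L(U)-w_1)(w_2-1) < (L(U)-w_2)(w_1-1)$, which rearranges to $(L(U)-1)(w_1-w_2) > 0$, true under $w_1 > w_2$ and $L(U) \geq 2$.

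The main obstacle I anticipate is justifying $m = w - 1$ precisely under the paper's endpoint conventions. The CSP definition requires both endpoints to have out-degree centrality at least two, which implicitly assumes that the first and last branch states of $U$ play the role of the outermost CSP endpoints, rather than any non-branch stations that might sit at the physical ends of the production line. I would also flag the boundary case in which a path has fewer than two active redundant paths: by the PCM definition $\kappa$ is then set to $1$, which is maximal, so the stated inequality still holds in the degenerate regime. Apart from these bookkeeping points, the argument reduces to a routine monotonicity check leveraging the formula from the previous Lemma.
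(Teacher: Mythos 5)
Your proof is correct and follows the same basic route as the paper's: both reduce the claim to the identity $\Gamma(\mathcal{H}_U) = L(U) - w$ established in the proof of the preceding Lemma and then argue monotonicity in the branch-state count $w$. The difference is that the paper stops at the numerator — its proof reads $\Gamma(\mathcal{H}_{U^1}) < \Gamma(\mathcal{H}_{U^2}) \to \kappa(U^1) < \kappa(U^2)$, which as written ignores the fact that the denominator $mL(U)$ in Equation~\ref{eq-kappa} also depends on $w$ through the CSP count $m$. You close that gap by adding the companion identity $m = w-1$, obtaining the closed form $\kappa(U) = \frac{L(U)-w}{(w-1)L(U)}$ and verifying strict monotonicity by differentiation (or, more elementarily, by cross-multiplication). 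This is the more complete argument: it happens that the denominator moves in the favorable direction (more branch states means more CSPs, hence a larger denominator compounding the smaller numerator), so the paper's one-line implication does yield a true conclusion, but that reasoning needs to be made explicit for the ratio comparison to be valid. Your flagged caveats — the endpoint convention needed for $m = w-1$, the requirement $w_1, w_2 \geq 2$ so the denominators are positive, and the degenerate case where $\kappa$ is set to $1$ — are all points the paper leaves implicit, and your handling of them is consistent with its definitions.
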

\begin{proof}
Assume the count of branch states of $U^1$ and $U^2$ are $m_1$ and $m_2$, respectively.
    \begin{align*}
      &m_{1}>m_{2} \to  L(U)-m_{1}<L(U)-m_{2}\\
      &\Gamma(\mathcal{H}_{U^1})<\Gamma(\mathcal{H}_{U^2}) \to \kappa({U^1})<\kappa({U^2}) 
    \end{align*}
    
\end{proof}
\qedhere
\begin{theorem}
\label{theorem-centrality2}
    Given both a path of the length $L(U)$ and the sum of branch state out-degree centralities of $\gamma$, the path with the maximum number of branch states of out-degree centrality of 2 has the lowest PCM. 
\end{theorem}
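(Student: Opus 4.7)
The plan is to reduce this statement to a combinatorial optimization over the number of branch states $m$, and then appeal to Corollary~\ref{corollary-minkappa}. Concretely, I would fix the path length $L(U)$ and the total branch-state out-degree $\gamma$, let $m$ denote the number of branch states in $U$, and let $x_1,\ldots,x_m$ be their individual out-degree centralities. By definition of a branch state, each $x_i\geq 2$, and by hypothesis $\sum_{i=1}^m x_i = \gamma$. The first step is to observe that these two constraints immediately force $m \leq \gamma/2$, with equality if and only if $x_i=2$ for all $i$. Hence the number of branch states is maximized exactly when every branch state has out-degree centrality $2$.

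The second step is to translate this count-maximization into a PCM-minimization. Because $L(U)$ is fixed, Corollary~\ref{corollary-minkappa} applies directly: given two paths of equal length, the one with strictly more branch states has strictly smaller PCM. Iterating this comparison (or applying it between the candidate path with $m=\gamma/2$ and any alternative with $m<\gamma/2$) yields that the path realizing the maximum $m$ achieves the minimum $\kappa$.

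The third step is to address a subtlety: two paths with the same $m$ but different distributions of the $x_i$'s could, a priori, give different values of $\kappa$. This is where the earlier Lemma (the corollary to the previous statement) is used: when $L(U)$ and $m$ agree, the out-degree centralities of the branch states do not enter $\Gamma(\mathcal H_U)$, and hence do not affect $\kappa(U)$. Combining this invariance with the count argument shows that any path achieving $m=\gamma/2$ attains the same (minimal) PCM, regardless of which specific branch states sit at out-degree $2$.

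I do not expect a major obstacle here; the argument is essentially ``minimize each $x_i$ to maximize $m$, then invoke two previous results''. The only point that requires care is the edge case where $\gamma$ is odd or where $\gamma < 2$: in those cases either no configuration with all $x_i=2$ exists, or $U$ has fewer than two active redundant paths and $\kappa(U)$ is defined to be $1$ by convention. I would handle this by stating that the statement is to be read over the nonempty set of feasible centrality assignments and, when that set is nontrivial, the maximizer is as claimed.
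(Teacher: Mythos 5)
Your proposal is correct and follows essentially the same route as the paper: both reduce the claim to maximizing the branch-state count under the constraint $\sum_i x_i=\gamma$ with $x_i\geq 2$, and then invoke Lemma~\ref{corollary-minkappa} to convert maximal count into minimal PCM. Your version is slightly more careful, making the bound $m\leq\gamma/2$ explicit, noting the invariance of $\kappa$ under redistribution of centralities, and flagging the odd-$\gamma$ edge case, none of which the paper spells out.
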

\begin{proof}
   Assume $B$ is the set of all branch states in a path. Then, the sum of branch states out-degree centralities is $\sum_{i\in B}x_{i}=\gamma$, where $x_{i}$ is the out-degree centrality of state $i$. According to Lemma \ref{corollary-minkappa}, a path with minimum PCM value shall have the largest possible number of branch states with the sum of $\gamma$. Therefore, the path with minimum PCM value has the largest number of elements of $B$ and the sum of out-degree centralities of the branches is constant, to maximize the number of branches their out-degree centralities has to be minimized whose value according to Equation~\ref{eq-outdegreecentrality} is 2. Therefore, the Statement is established.
\end{proof}
\begin{theorem}
\label{corollay-optimalriskmeasure}
    If path $U$ of length $L(U)$ and the sum of branch state out-degree centralities of $\gamma$ has the lowest PCM among all other paths of the same length and sum of branch state out-degree centralities, then adding/removing a CSP $U^{\prime} \subset U$ of length $L(U^{\prime})$ results in a new path which has the lowest PCM among all paths of the new length and sum of branch state out-degree centralities of $\gamma$.
\end{theorem}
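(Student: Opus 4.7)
The plan is to reduce the statement directly to Statement~\ref{theorem-centrality2} by showing that modifying a path inside a CSP does not disturb the branch-state structure that makes $U$ optimal. The first step is to invoke Statement~\ref{theorem-centrality2} on $U$: since $U$ has the lowest PCM among paths of length $L(U)$ whose branch-state out-degree centralities sum to $\gamma$, $U$ must be a path that maximises the number of branch states of out-degree centrality~$2$. Because every branch state contributes at least~$2$ to $\gamma$, this forces $U$ to contain exactly $\gamma/2$ branch states, each of out-degree~$2$.

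Next, I would analyse the effect of adding or removing the sub-path $U' \subset U$. By the definition of a CSP, every interior state of $U'$ has out-degree centrality~$1$, so none of the interior states is a branch state. Thus inserting additional degree-one states inside $U'$, or deleting such states from $U'$, leaves the branch-state set of the path and their individual out-degree centralities unchanged. Let $\tilde{U}$ denote the resulting path. Then $\tilde{U}$ has length $L(\tilde{U}) = L(U) \pm L(U')$, the same collection of branch states as $U$, and in particular the same sum $\gamma$ of branch-state out-degree centralities and the same number $\gamma/2$ of such branch states, all of out-degree~$2$.

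Finally, I would apply Statement~\ref{theorem-centrality2} a second time, now at the new length $L(\tilde{U})$ and the same sum $\gamma$: among all such paths, the one that maximises the number of branch states of out-degree~$2$ has the smallest PCM. Since $\tilde{U}$ already attains the maximum possible count $\gamma/2$, it achieves the lowest PCM among all paths of length $L(\tilde{U})$ with branch-state out-degree sum $\gamma$, which is the desired conclusion.

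The main obstacle is the middle step: one has to be careful to justify that ``adding/removing a CSP'' really means a modification that is internal to the out-degree-one portion of $U$, so that no new branch state is produced and no existing branch state is eliminated or has its out-degree changed. Everything else is a clean bookkeeping argument, and Lemma~\ref{corollary-minkappa} together with Statement~\ref{theorem-centrality2} do the quantitative work for both the original and the modified path.
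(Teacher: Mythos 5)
Your proposal is correct and follows essentially the same route as the paper's own proof: both arguments observe that adding or removing a CSP leaves the branch-state set, their count, and the sum $\gamma$ of their out-degree centralities unchanged, and then invoke Statement~\ref{theorem-centrality2} to conclude the new path is PCM-optimal at its new length. Your version is somewhat more explicit (e.g., spelling out that optimality forces $\gamma/2$ branch states each of out-degree~$2$, and flagging that the modification must be internal to the degree-one portion of the path), but the underlying argument is the same.
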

\begin{proof}
Adding/Removing a CSP does not affect the sum of branch states out-degree centrality nor the number of branch states, therefore:
\begin{align*}
    &U^{\prime} \subset U \to U-U^{\prime} \subset U\\
    &\sum\limits_{\substack{i\in U-U^{\prime}}}^{}x_{i}=\gamma.
\end{align*}
As the out-degree centrality of all branch states of path $U-U^{\prime}$ are also 2, according to Statement \ref{theorem-centrality2} the path has the lowest PCM value and the Statement is established.
\end{proof}
\section{Risk-Averse PTA MPC}
\label{sec-PTAMPC}
The PTA MPC is a control algorithm in which the control objective is to find a path that includes all states in the desired states set in the order defined by the user while minimizing the cumulative temporal cost of state executions and edge transverses \cite{anbarani_risk-averse_2022}. The objective of the Risk-Averse PTA MPC algorithm is to minimize the risk of path failure due to state failure in addition to the objectives of the PTA-MPC algorithm.
\par
Figure \ref{figure-MPCblockdiagram} depicts the risk-averse PTA MPC block diagram. Prior to implementing the risk-averse PTA MPC block, the update operator updates the original PTA using the redundant PTA and sensory feedback \cite{anbarani_risk-averse_2022}. Also, the current state and the remaining desired state set are recalled from the memory to input the update operator and eventually run the risk-averse PTA MPC block. The block executes a multi-objective optimization in terms of cost and risk and returns the optimal path, from which the first state is executed and stored in memory for the next iteration. If the executed state belongs to the desired states set, the set is updated by removing the executed state. Otherwise, the desired state set remains unchanged for the next iteration. This process is continued until either the desired states set is exhausted, or there exists no solution to the control problem making the risk-averse PTA-MPC \textit{unsatisfiable} ({\fontfamily{qcr}\selectfont
UNSAT}).
\par
The goal of the multi-objective optimization problem is to minimize the cost of the path while minimizing the risk, which can be represented as follows:
\begin{equation}
\label{eq-costriskoptimization-general}
    \alpha^{\ast}= \underset{\alpha \in \mathcal{L}(\mathcal{A})}{\arg\min } \left(V(\alpha) =\Bigl[P(\alpha), \Bar{R}(\alpha) \Bigr]\right)
\end{equation}
where $\alpha$ is the path i.e. the sequence of states, $V(\alpha)$ is the objective function value, $P(\alpha)=\sum_{i=1}^{|\alpha|}P_{i}$ is the cost associated with the path, and $\Bar{R}(\alpha)$ is the \textit{Average Risk Measure} of the path, and $\mathcal{L}(\mathcal{A})$ is the set of all feasible paths in PTA $\mathcal{A}$. The average risk measure is comprised of two parts: (i) user-defined multiplier and, (ii) PTA architecture-inherited value.
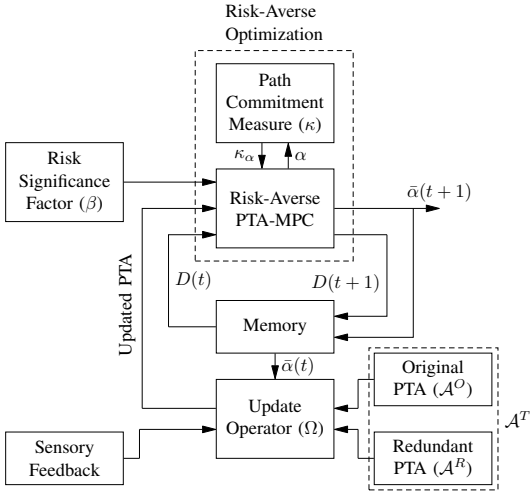
\begin{figure}[t]
\begin{center}
    \begin{adjustbox}{minipage=\textwidth,scale=0.7}
    \begin{tikzpicture}[>={Stealth[inset=0pt,length=8pt,angle'=28,round]},scale=1,font=\normalsize]
        \coordinate (LLT) at (-1.65,-3);
        \coordinate (LLMPC) at (1,4);
        \coordinate (LLRPTA) at (4,-0.5);
        \coordinate (LLTPTA) at (3.9,-0.6);
        \coordinate (LLZ) at (-3,-0.5);
        \coordinate (LLLE) at (6,-1.75);
        \coordinate (LLOPTA) at (4,1);
        \coordinate (LLLN) at (6,0.75);
        \coordinate (LLUO) at (1,0);
        \coordinate (LLM) at (1,2);
        \coordinate (LLFP) at (-3,4.5);
        \coordinate (LLPCM) at (1,6);
        \coordinate (LLDB) at (0.6,3.75);
        \coordinate (LLDB2) at (-3.25,-0.75);
        \node[draw, minimum width=2.45cm, minimum height=2.7cm, anchor=south west, text width=2cm, font=\normalsize, align=center,label=0:$ \mathcal{A}^{T}$, densely dashed] (TPTA) at (LLTPTA)
        {};

        \node[draw, minimum width=3cm, minimum height=4cm, anchor=south west, text width=0.5cm, font=\normalsize, align=center,label={[align=left]90:{Risk-Averse\\Optimization}}, densely dashed] (DB) at (LLDB)
        {};

        
        \node[draw, minimum width=2cm, minimum height=1.5cm, anchor=south west, text width=2cm, font=\normalsize, align=center] (FP) at (LLFP)
        {Risk Significance\\Factor ($\beta$)};

        \node[draw, minimum width=2cm, minimum height=1.5cm, anchor=south west, text width=2cm, font=\normalsize, align=center] (PCMB) at (LLPCM)
        {Path Commitment\\Measure ($\kappa$)};

        \node[draw, minimum width=2cm, minimum height=1cm, anchor=south west, text width=2cm, font=\normalsize, align=center] (EDS) at (LLZ)
        {Sensory Feedback};
      
        \node[draw, minimum width=2cm, minimum height=1.5cm, anchor=south west, text width=2cm, font=\normalsize, align=center] (MPC) at (LLMPC)
        {Risk-Averse\\PTA-MPC};
        
        \node[draw, minimum width=2cm, minimum height=1cm, anchor=south west, text width=2cm, font=\normalsize, align=center] (LT) at (LLOPTA) {Original\\PTA  ($\mathcal{A}^{O}$)};
        
        \node[draw, minimum width=2cm, minimum height=1cm, anchor=south west, text width=2cm, font=\normalsize, align=center] (Z) at (LLRPTA) {Redundant\\PTA ($\mathcal{A}^{R}$)};
        
        \node[draw, minimum width=2cm, minimum height=1.5cm, anchor=south west, text width=2cm, font=\normalsize, align=center] (UO) at (LLUO) {Update\\Operator ($\Omega$)};
        
        \node[draw, minimum width=2cm, minimum height=1cm, anchor=south west, text width=2cm, font=\normalsize, align=center] (M) at (LLM) {Memory};
        
        \draw[->] (MPC.0) -- ($(MPC.0)+(2,0)$) node[anchor=center,above] {$\bar{\alpha}(t+1)$};
        \draw[->] ($(MPC.90)+(0.25,0)$) --  ($(PCMB.270)+(0.25,0)$)node[anchor=center,right,pos=0.5] {$\alpha$};
        \draw[->] ($(PCMB.270)+(-0.25,0)$) --  ($(MPC.90)+(-0.25,0)$)node[anchor=center,left,pos=0.5] {$\kappa_{\alpha}$};
        \path[draw,-] ($(MPC.0)+(0,-0.5)$) -- ($(MPC.0)+(1,-0.5)$);
        \draw[->] ($(MPC.0)+(1,-0.5)$) |- node[left,pos=0.3]{$D(t+1)$} ($(M.0)+(0,0.2)$);
        \draw[->] ($(MPC.0)+(1.5,0)$) |-  ($(M.0)+(0,-0.2)$);
        \draw[->] ($(M.270)$) -- (UO.90)node[right,pos=0.5] {$\bar{\alpha}(t)$};
        
        \path[draw,-] ($(UO.180)+(0,0.2)$) -- ($(UO.180)+(-1.4,0.2)$);
        \draw[->] ($(UO.180)+(-1.4,0.2)$) |- node[left,pos=0.26]{\rotatebox{90}{Updated PTA}} (MPC.180);
        
        \path[draw,-] (M.180) -- ($(M.180)+(-0.9,0)$);
        \draw[->] ($(M.180)+(-0.9,0)$) |- node[right,pos=0.25]{$D(t)$} ($(MPC.180)+(0,-0.5)$);
        
        
        \draw[->] (FP.0) -- ($(MPC.180)+(0,0.5)$);
        
        \path[draw,-] (LT.180) -- ($(LT.180)+(-0.3,0)$);
        \path[draw,-] ($(LT.180)+(-0.3,0)$) -- ($(LT.180)+(-0.3,-0.55)$);
        \draw[->] ($(LT.180)+(-0.3,-0.55)$) -- ($(UO.0)+(0,0.2)$);
        
        \path[draw,-] (Z.180) -- ($(Z.180)+(-0.3,0)$);
        \path[draw,-] ($(Z.180)+(-0.3,0)$) -- ($(Z.180)+(-0.3,0.55)$);
        \draw[->] ($(Z.180)+(-0.3,0.55)$) -- ($(UO.0)+(0,-0.2)$);
        
        \path[draw,-] (EDS.0) -- ($(EDS.0)+(0.3,0)$);
        \path[draw,-] ($(EDS.0)+(0.3,0)$) -- ($(EDS.0)+(0.3,0.55)$);
        \draw[->] ($(EDS.0)+(0.3,0.55)$) -- ($(UO.180)+(0,-0.2)$);
        
    \end{tikzpicture}    
    \end{adjustbox}
    \end{center}
    \caption{Block diagram of risk-averse MPC with risk significance factor ($\beta$).}
    \label{figure-MPCblockdiagram}
\end{figure}
The user-defined multiplier indicates the relative significance of the average risk measure compared to the price whereas the PTA architecture-inherited value determines the risk of failure due to the existing CSPs in a particular path.
\par
Equation~\ref{eq-costriskoptimization-general} constitutes a multi-objective optimization problem in terms of cost and risk. By describing the average risk measure in terms of PCM $\kappa({\alpha})$ and cost, i.e. $\Bar{R}(\alpha)=\kappa({\alpha})\sum_{i=1}^{|\alpha|}P_{i}$, one may convert the optimization problem into a single-objective problem in terms of cost only. The resulting optimization problem is as follows:
\begin{align}
& \alpha^{\ast}= \underset{\alpha \in \mathcal{L}(\mathcal{A})} {\arg\min } \left(V(\alpha) =\Bigl[1+\beta \kappa({\alpha})\Bigr] \sum_{i=1}^{|\alpha|}P_{i}\right).
\end{align}
where $0\leq \beta \leq 1$ is the user-defined risk significance factor which determines the relative significance of risk with respect to the cost. When $\beta=0$, the optimization problem is simplified as a PTA-MPC without the risk-averse feature while $\beta=1$ translates to the equivalent significance of cost and risk. All cases with $\beta >1$ represent a higher emphasis on risk-averseness than the cost. For non-deterministic systems where there are infinitely many states or when $\beta \to \infty$, the multi-objective optimization reduces to a single-objective optimization in terms of risk-averseness only. However, in the case of $\beta \to \infty$, the optimization problem is undefined and, therefore, it is not used in the proposed framework to model and control manufacturing systems.
\par
To impose the fact that the solution belongs to the legal automaton, i.e. it only includes the states that are legal, in addition to the cost objective, we introduce the following constraints: (i) the optimal path belongs to the set of all legal paths in the PTA, and; (ii) the path satisfies the order of the desired state set $\Sigma$.
Assuming that $\alpha$ is a path of length N, i.e. $\alpha=\langle \alpha_{i_{1}},\cdots, \alpha_{i_{N}} \rangle$ and $\Sigma=\langle \sigma_{1},\cdots,\sigma_{N_{d}} \rangle$ with $N_{d}\leq N$, where $N_{d}$ is the number of desired states, the constrained optimization problem can be described as:
\begin{align}
\label{eq-costfunction-modified}
& \alpha^{\ast}= \underset{} {\arg\min } \left( V(\alpha) = \Bigl[1+\beta \kappa({\alpha})\Bigr]\sum_{i=1}^{|\alpha|}P_{i}\right)\\
&\alpha \in \mathcal{L}(\mathcal{A}) \label{eq-validity}\\
&\phi(\Sigma)=\text{True}
\end{align}
where $\phi(\Sigma)$ is True if:
\begin{align}
&\forall m<n \in \{1,\cdots,N_{d}\},\forall \sigma_{m},\sigma_{n} \in \Sigma \notag\\
&\exists \alpha_{ik},\alpha_{il} \in \alpha: \sigma_{m}=\alpha_{ik},\sigma_{n}=\alpha_{il},k<l\hspace{0.5cm}\label{eq-desiredvalue}
\end{align}
and is False otherwise. 

The form of constrained optimization problem described by Equation~\ref{eq-costfunction-modified} is similar to that detailed in \cite{balta_model_2022}. Thus, it can be converted to a first-order logic problem for which theorem provers can be used to find a solution \cite{bjorner_z_nodate}.
\begin{algorithm}[t]
\caption{Risk-Averse PTA-MPC Optimization Algorithm}
\label{algorithm-MPC}
\begin{algorithmic}[1]
\Require{$\mathcal{A}=(Q, C,\Sigma, E, I, R, P, q_{0}),D(t),\beta, \kappa({\alpha})$}
\Ensure{$\alpha,\alpha(t+1),D(t+1)$ or {\fontfamily{qcr}\selectfont UNSAT}}
\Initialize{
$\mathcal{A}_{updated} \gets \mathcal{A}$\\
$D(t+1) \gets \Sigma$,\hspace{0.1cm}$\alpha(t+1) \gets \emptyset$\\
$\bar{\alpha}(t) \gets q_{0}$,\hspace{0.1cm}$x \gets \emptyset$\\
}
\While{$D(t+1)\neq \emptyset$}
\State (i) Check if Eq. \ref{eq-costfunction-modified} is satisfiable
\If{(i) is not {\fontfamily{qcr}\selectfont UNSAT}}
\State $V_{min} \gets \emptyset$
\While{$T\neq \emptyset$ (The set of solutions)}
\State Pick one path $\alpha \in T$
\State Calculate the Path Commitment Measure $\kappa({\alpha})$ using Equation \ref{eq-kappa}
\State Determine risk-averse PTA-MPC optimization solution $\bar{\alpha}(t+1)$ based on Eq. \ref{eq-costfunction-modified} using $\mathcal{A}_{updated}$ and $\kappa({\alpha})$ and $\beta$.
\If{$V(\alpha)<V_{min}$}
\State $V_{min} \gets V(\bar{\alpha})$
\EndIf
\EndWhile

\State $\alpha(t+1) \gets \langle \alpha(t),\bar{\alpha}[1](t+1) \rangle$
\State \Comment{$\bar{\alpha}[1](t+1)$ is the first element of $\bar{\alpha}(t+1)$}
\Else
\State return {\fontfamily{qcr}\selectfont UNSAT}
\State stop
\EndIf
\State $\bar{\alpha}(t) \gets \bar{\alpha}[1](t+1)$
\If{$\bar{\alpha}[1](t+1) \in \Sigma$}
\State $D(t+1) \gets \Sigma-{\bar{\alpha}[1](t+1)} $
\EndIf
\State $D(t) \gets D(t+1)$
\State return $\alpha(t+1),D(t+1)$
\EndWhile
\end{algorithmic}
\end{algorithm}
Algorithm \ref{algorithm-MPC} depicts the steps of risk-averse optimization process.
\begin{figure*}[t]
    \centering
    \subfloat[t][]{\includegraphics[scale=0.4,valign=b]{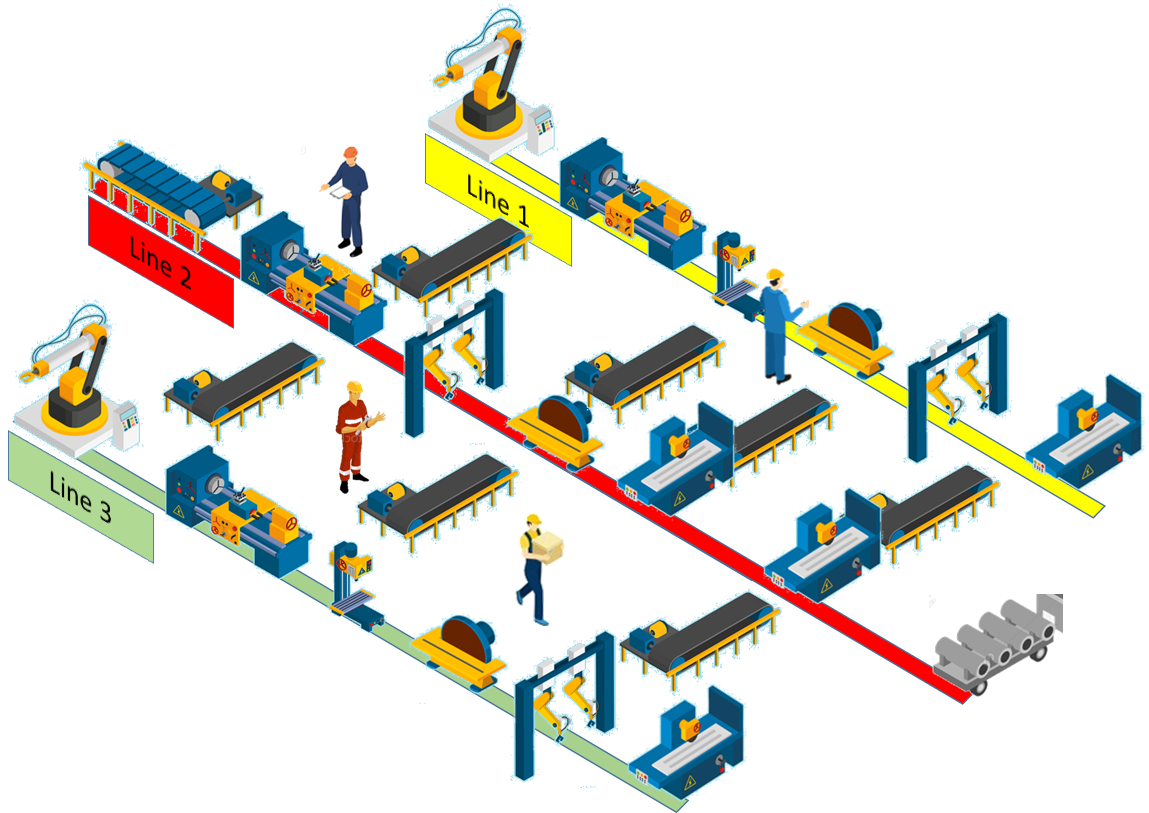}\label{figure-plantschematic}}
    \subfloat[t][]{
    \begin{adjustbox}{scale=0.7,valign=b}
    \begin{tikzpicture}[node distance = 2cm, on grid,>={Stealth[inset=0pt,length=6pt,angle'=28,round]}]
    
    \node[state, initial] (q1) {$q^1$};
    \node[state] at (1,-2.5) (q2) {$q^2$};
    \node[state, right of=q2] (q3) {$q^3$};
    \node[state, right of=q3] (q4) {$q^4$};
    \node[state, right of=q4] (q5) {$q^5$};
    \node[state, right of=q5] (q6) {$q^6$};
    \node[state, right of=q6] (q7) {$q^7$};
    \node[state] at (11.8,0) (q8) {$q^8$};

    \node[state] at (2,0) (q9) {$q^9$};
    \node[state, right of=q9] (q10) {$q^{10}$};
    \node[state, right of=q10] (q11) {$q^{11}$};
    \node[state, right of=q11] (q12) {$q^{12}$};
    \node[state, right of=q12] (q13) {$q^{13}$};
    
    \node[state] at (1,2.5) (q14) {$q^{14}$};
    \node[state, right of=q14] (q15) {$q^{15}$};
    \node[state, right of=q15] (q16) {$q^{16}$};
    \node[state, right of=q16] (q17) {$q^{17}$};
    \node[state, right of=q17] (q18) {$q^{18}$};
    \node[state, right of=q18] (q19) {$q^{19}$};
    
    \node[state,fill=red!20, densely dashed] at (3,-1.25) (q20) {$q^{20}$};
    \node[state,fill=red!20, densely dashed] at (7,-1.25) (q21) {$q^{21}$};
    \node[state,fill=red!20, densely dashed] at (11,-1.25) (q22) {$q^{22}$};
    \node[state,fill=red!20, densely dashed] at (2,1.25) (q23) {$q^{23}$};
    \node[state,fill=red!20, densely dashed] at (5,1.25) (q24) {$q^{24}$};
    \node[state,fill=red!20, densely dashed] at (7,1.25) (q25) {$q^{25}$};
    \node[state,fill=red!20, densely dashed] at (11,1.25) (q26) {$q^{26}$};

 \draw 
    
   (q1) edge[bend right] node [below left]{} (q2)
   (q2) edge node [below] {} (q3)
   (q3) edge node [below]{} (q4)
   (q4) edge node [below]{} (q5)
   (q5) edge node [below]{} (q6)
   (q6) edge node [below]{} (q7)
   (q7) edge[bend right] node [below right] {} (q8)
   
   (q1) edge node [above]{} (q9)
   (q9) edge node [above]{} (q10)
   (q10) edge node [above]{} (q11)
   (q11) edge node [above]{} (q12)
   (q12) edge node [above]{} (q13)
   (q13) edge node [above]{} (q8)
   
   (q1) edge[bend left] node [below left]{} (q14)
   (q14) edge node [above]{} (q15)
   (q15) edge node [above]{} (q16)
   (q16) edge node [above]{} (q17)
   (q17) edge node [above]{} (q18)
   (q18) edge node [above]{} (q19)
   (q19) edge[bend left] node [above right]{} (q8)
   
   (q2) edge[bend right, densely dashed] node [above left]{} (q20)
   (q20) edge[bend right, densely dashed] node [above left]{} (q9)
   (q4) edge[bend right, densely dashed] node [right]{} (q21)
   (q21) edge[bend right, densely dashed] node [right]{} (q11)
   (q6) edge[bend right, densely dashed] node [right]{} (q22)
   (q22) edge[bend right, densely dashed] node [right]{} (q13)
   (q15) edge[bend right, densely dashed] node [right]{} (q23)
   (q23) edge[bend right, densely dashed] node [right]{} (q10)
   (q15) edge[bend left, densely dashed] node [right]{} (q24)
   (q24) edge[bend right, densely dashed] node [right]{} (q11)
   (q18) edge[bend right=10, densely dashed] node [above right]{} (q25)
   (q25) edge[bend right, densely dashed] node [above right]{} (q12)
   (q18) edge[bend left, densely dashed] node [above right]{} (q26)
   (q26) edge[bend left, densely dashed] node [above right]{} (q13);
    \end{tikzpicture}
    \end{adjustbox}
    \label{figure-plantPTA}}
    
    \caption{\protect\subref{figure-plantschematic} The schematic and,  \protect\subref{figure-plantPTA} equivalent PTA with Original Automaton ($\mathcal{A}^{O}$ solid line) and Redundant Automaton ($\mathcal{A}^{R}$ dashed line) for a production line.}
    \label{figure-samplePTA}
\end{figure*}
\par
In line 3, the PTA is updated by Update Operator after each event, then the current state is recalled from the memory to form the updated PTA \cite{anbarani_risk-averse_2022}. The value of $\beta$ and the current state name is used as input to the risk-averse optimization algorithm. In lines 4-7, all possible legal paths that contain the remaining desired states set are found and the corresponding branch states, CSPs, and CSP lengths are used to calculate each path PCM in line 8,  separately. The problem is treated as a single-objective PTA MPC and the objective function value $V({\alpha})$ is calculated for each path. Finally, in lines 10-12, a path with minimum objective function value is chosen, and in line 20, the first element of its respective path is executed. If the executed state belongs to the remaining desired states set, the set is updated in lines 21-23 by removing that state, and the results are reported in lines 24-25. The algorithm continues until either the desired states set becomes empty, or no feasible solution exists ({\fontfamily{qcr}\selectfont UNSAT}).

\section{Case Study}
\label{sec-casestudy}
Similar to \cite{kovalenko_priced_2020,kovalenko_cooperative_2022,ocker_framework_2019}, the production line exchange can be represented as redundant routes which are enabled in case a state failure occurs.
Thus, the proposed risk-averse PTA-MPC framework, which we call the \textit{proposed framework} hereinafter, updates the PTA in case of any \textit{event}, e.g., the completion of a job in a workstation or transitions of a part to other workstations.
The PTA is also updated in order to reflect changes due to failure prior to control actions. In case of workstation failure, the proposed framework enables the redundant routes which are closest to the current workstation, enabling the product to switch to other production lines to avoid failure. Note that although repetitive layout updates inherent in the MPC control scheme provide insight into \textit{current} workstation failures, these updates does not guarantee a fail-safe passage despite \textit{possible} workstation failures in the future. \begin{table}[h]
    \centering
    \renewcommand{\arraystretch}{1.2}
    \caption{List of the states, physical locations and centralities of manufacturing plant of Figure \ref{figure-samplePTA}.}
    \label{tab:listoflocationsforsystems}
    \begin{tabular}{|c|l|c|}
         \hline
         \multirow{2}{*}{State}&
         \multirow{2}{*}{Location}&
         \multirow{2}{*}{Centrality($x_{i}$)} \\
         &&\\
         \hline
         $q^1$& Material Depot & 3\\
         \hline
         $q^{2},q^{14}$& Robotic Manipulator 1 and 3 & 2,1\\
         \hline
         $q^{3},q^{9},q^{15}$& Turning Center 1,2 and 3 & 1,1,3\\
         \hline
         $q^{4},q^{16}$& Manual Handler 1 and 3 & 2,1\\
         \hline
         $q^{5},q^{11},q^{17}$& Disk Sanding Machine 1,2 and 3 & 1,1,1\\
         \hline
         $q^{6},q^{10},q^{18}$& CMM 1,2 and 3 & 2,1,3\\
         \hline
         $q^{7},q^{12},q^{19}$& Grinding Center 1,2,3 & 1,1,1\\
         \hline
         $q^{13}$& Grinding Center (standby) & 1\\
         \hline
         $q^8$& Target Storage & 0\\
         \hline
         $q^{20}-q^{26}$& Conveyor Belt & All 1\\
         \hline
    \end{tabular}
    \vspace{-2mm}
\end{table} 
\begin{figure*}[thpb]
\centering
    \includegraphics[width=0.55\textwidth]{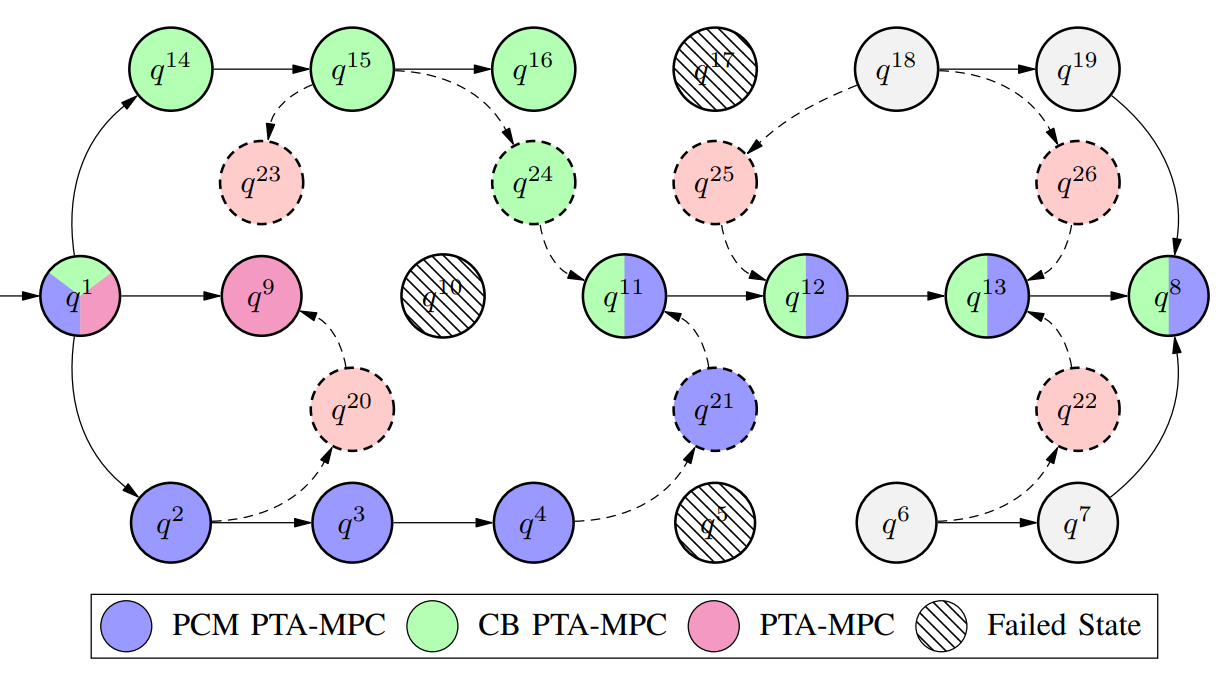}
	\caption{The solution for PCM risk-averse PTA-MPC (purple), Centrality-based risk-averse PTA-MPC (green), and regular PTA-MPC (red) under unpredictable state failure (hashed black) for a sample production line.}
 \label{fig:xxxx}
\end{figure*}
Risk-averseness is maintained by finding a path that contains the highest average number of redundant paths per path length to ensure the safe passage of the product in case any of the state ahead fails.

Figure~\ref{figure-plantschematic} depicts a typical layout of the flexible manufacturing plant used for customized car painting. Figure~\ref{figure-plantPTA} depicts the equivalent PTA in which original automaton $\mathcal{A}^{O}$ and redundant automaton $\mathcal{A}^{R}$ are marked by solid lines and dashed lines, respectively. The plant has three original routes $\alpha_{L1}=\langle q^{1},q^{14},\cdots,q^{19},q^{8} \rangle$, $\alpha_{L2}=\langle q^{1},q^{9},\cdots,q^{13},q^{8} \rangle$, $\alpha_{L3}=\langle q^{1},q^{2},\cdots,q^{7},q^{8} \rangle$ which are called \textit{Line 1}, \textit{Line 2} and \textit{Line 3}, respectively. Table \ref{tab:listoflocationsforsystems} summarizes the locations and corresponding out-degree centralities. For this example, $\Sigma=\{\sigma_{1}^{d}\}=\{q^{8}\}$ and $\beta=1$, i.e. the risk is equally important as cost and $\forall q^{i}\in Q: P_{i}=1,h_{i}=1$ where $h_{i}$ is the state risk-factor as in \cite{anbarani_risk-averse_2022}. Hereinafter, we designate the time at which state $q^{i}$ is occupied by $t_{i}$. We compare the result of our proposed method to that in \cite{anbarani_risk-averse_2022} and \cite{balta_model_2022} for two scenarios:
\begin{itemize}[leftmargin=*]
    \item Scenario 1:
    Initial $q^{10}$ failure at $t>t_{1}$ followed by $q^{5}$ failure at $t_{2}<t<t_{4}$ and $q^{17}$ failure at $t>t_{15}$,
    \item Scenario 2: 
    Initial $q^{10}$ failure at $t>t_{1}$ followed by $q^{5}$ failure at $t_{2}<t<t_{4}$ and $q^{17}$ failure at $t\leq t_{15}$.
\end{itemize}
In both Scenarios, there is a failure in workstation $q^{10}$ after a product has committed to Line 2 followed by a failure in station $q^5$ and $q^{17}$ when a product is in Line 1 or Line 3, respectively. What distinguishes Scenario 1 and Scenario 2 is the time at which failure in $q^{17}$ occurs. In Scenario 1, the failure occurs \textit{after} the product has left workstation $q^{15}$ whereas for Scenario 2, the failure occurs \textit{before} product is leaving $q^{15}$.
Figure~\ref{figure-samplePTAunderfailure} compares our  \textit{PCM risk-averse PTA-MPC} (PCM PTA-MPC) performance with PTA-MPC without risk-averse feature ~\cite{balta_model_2022}, denoted as \textit{PTA-MPC} (PTA-MPC), as well as the MPC formulation described in \cite{anbarani_risk-averse_2022}, denoted as \textit{Centrality-based risk-averse PTA-MPC} (CB PTA-MPC).
\par
In Scenario 1, since Line 2 has the least number of states, it is the cost-optimal solution. Therefore, the PTA-MPC selects Line 2 as its solution.
As the first failure occurs at $q^{10}$, the optimization fails and it returns {\fontfamily{qcr}\selectfont UNSAT}  which highlights the inflexibility of PTA-MPC to failures. Given that Line 1 has the largest number of out-degree centralities among all production lines, CB PTA-MPC takes Line 1. But as failure occurs at $q^{17}$, the optimization fails and it returns {\fontfamily{qcr}\selectfont UNSAT}. Therefore, despite having the largest centrality, Line 1 is still prone to failure. This is because Line 1 has few branch states with large out-degree centrality. Note that the most risk-averse design for Line 1 with the same out-degree centrality would have been to maximize the possible number of branch states with out-degree centrality of 2 (see Statement~\ref{theorem-centrality2}). For PCM PTA-MPC, Line 3 is taken as it has the lowest PCM value, i.e. $\kappa({L3})=0.15625$. As failure occurs at $q^{5}$, the redundant path is enabled allowing for rerouting to Line 2. Therefore, the optimal path is $\alpha_{RM}^{\ast}=\langle q^{1},q^{2},q^{3},q^{4},q^{21},q^{11},q^{12},q^{13},q^{8} \rangle$ with objective function value $V^{\ast}=18$, according to Equation~\ref{eq-costfunction-modified}. Note that as re-routing occurs, the PCM being used is no longer that of Line 3, rather it is that of the optimal path with $\kappa^{\ast}=2$. Line 3 is designed to have 3 branch states each having an out-degree centrality of 2 and therefore is the most PCM optimal (see Statement~\ref{theorem-centrality2}). Since the redundant paths are dispersed through Line 3, the chance of failure is minimum among all paths with a length of 8 and the sum of out-degree centrality of 10. Also, according to Lemma~ \ref{corollay-optimalriskmeasure}, the sub-path traversed prior to failure, i.e.  $\hat{\alpha}=\langle q^{1},q^{2},q^{3},q^{4}\rangle$ is still the most optimal sub-path.

\par
Scenario 2 is similar to Scenario 1 with the difference that the failure in Line 1 occurs before the product passes the branch state, in which case the CB PTA-MPC switches to Line 2 and the optimization is complete with the optimal path being $\alpha_{CB}^{\ast}=\langle q^{1},q^{14},q^{15},q^{24},q^{11},q^{12},q^{13},q^{8} \rangle$ and objective function value of $V^{\ast}=16$. Comparing the solution of CB PTA-MPC to that of PCM depicts that the former's solution is optimum in terms of cost and risk overall. However, as mentioned in Scenario 1, CB PTA-MPC is more prone to failure compared to PCM PTA-MPC. Therefore, although in certain cases CB PTA-MPC can have an optimum solution overall, it is relatively more prone to failure compared to the PCM PTA-MPC.
\par
The common aspect among all strategies is that the control algorithm is \textit{aware} of the current workstation failures. However, based on the risk-averse route scheduling, the outcome is different. PTA-MPC scheme only persuades a minimum cost, whereas CB PTA-MPC quantifies the risk based on the sum of out-degree centrality measures. Finally, PCM PTA-MPC quantifies the average number of branch-states per path length improving the risk-averseness and showing relatively better performance in face of state failure. Despite this, under certain circumstances which to be discussed in future work, the results of the proposed framework might not be the most cost-optimal. The above example illustrated a case in which both CB PTA-MPC and PCM PTA-MPC reached the objective with CB PTA-MPC having a lower objective function value and therefore, performing better. As discussed, depending on which state failing and when, these peculiarities might occur, however, in majority of cases PCM PTA-MPC shows more flexibility to state failure compared to CB PTA-MPC. 

\section{Conclusion and Future Work}
\label{sec-conclusion}

In this work, we have proposed a risk-averse PTA-MPC that is based on the automaton feature and performs relatively safer compared to centrality-based risk-averse PTA-MPC. As part of this framework, we defined Path Commitment Measure in terms of Committed Sub-Path. We further defined the significance factor to relate cost and the path commitment measure that allows the reduction of a  multi-objective constrained optimization into cost-dependent single-objective optimization. Finally, we developed risk-averse PTA-MPC algorithm which returns a robust solution considering both risk and cost. The proposed framework was shown to allow the system to reach the goal state in the face of state failure comparatively better compared to centrality-based risk-averse algorithms when applied in a manufacturing system. It was also shown that under certain conditions, centrality-based risk averse algorithm can have a better result depending on the nature and time of failure. It was shown that the commitment measure risk averse PTA-MPC is more robust in face of failure.
This work can further be extended to other application areas, such as mass-production, supply-chain and can be tested on real systems using real computer solvers. In addition, a hybrid controller consisting of CB PTA-MPC and PCM PTA-MPC can be developed to leverage the benefits of both controllers based on the PTA architecture and the manufacturing system environment. 
\vfill
\null

\end{document}